\documentclass[11pt]{llncs}
\usepackage{fullpage}

\newcommand{\Oh}[1]
    {\ensuremath{\mathcal{O}\left( {#1} \right)}}
\newcommand{\Om}[1]
    {\ensuremath{\Omega\left( {#1} \right)}}

\begin{document}

\title{A Lower Bound on the Complexity of\\
    Approximating the Entropy of a Markov Source}
\author{Travis Gagie}
\institute{Department of Computer Science\\
    University of Chile\\
    \email{travis.gagie@gmail.com}}
\maketitle

The Asymptotic Equipartition Property (see, e.g.,~\cite{CT06}) implies that, if we choose the characters of a string $s$ of length $n$ independently and according to the same probability distribution $P$ over the alphabet then, for large values of $n$, the 0th-order empirical entropy \(H_0 (s)\) of $s$ (see, e.g.,~\cite{Man01}) will almost certainly be close to the entropy \(H (P)\) of $P$.  Batu, Dasgupta, Kumar and Rubinfeld~\cite{BDKR05} showed that, if \(H (P) = \Om{\gamma / \epsilon}\), then we can almost certainly approximate \(H (P)\) to within a factor of $\gamma$ after seeing $\Oh{\sigma^{(1 + \epsilon) / \gamma^2} \log \sigma}$ characters of $s$, where $\sigma$ is the alphabet size and $\epsilon$ is any positive constant; they proved a lower bound of $\Om{\sigma^{1 / (2 \gamma^2)}}$, which was later improved by Raskhodnikova, Ron, Shpilka and Smith~\cite{RRSS07} and Valiant~\cite{Val08}.

Similarly, the Shannon-McMillan-Breiman Theorem (see, e.g.,~\cite{CT06} again) implies that, if we generate $s$ from a stationary ergodic $k$th-order Markov source $\mathcal{X}$ then, for large values of $n$, the $k$th-order empirical entropy \(H_k (s)\) of $s$ (see, e.g.,~\cite{Man01} again) will almost certainly be close to the entropy \(H (\mathcal{X})\) of $\mathcal{X}$.  Although many papers have been written about approximating the entropy of a Markov source based on a sample (see, e.g.,~\cite{CKV04} and references therein), we know of no upper or lower bounds similar to Batu et al.'s results.  We now give a simple proof that, even if we know $\mathcal{X}$ has entropy either $0$ or at least \(\log (\sigma - k)\), there is still no algorithm that, with probability bounded away from \(1 / 2\), guesses its entropy correctly after seeing at most \((\sigma - k)^{k / 2 - \epsilon}\) characters.

\begin{lemma}
For any \(k \geq 1\), \(\epsilon > 0\) and sufficiently large $\sigma$, there is a $k$th-order Markov source over the alphabet \(\{0, \ldots, \sigma - 1\}\) that has entropy at least \(\log (\sigma - k)\) but, with high probability, does not emit duplicate $k$-tuples among its first \((\sigma - k)^{k / 2 - \epsilon}\) characters.
\end{lemma}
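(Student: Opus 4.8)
The plan is to exhibit one explicit source and then reduce the ``no duplicate $k$-tuples'' claim to a birthday-type estimate. I would define the $k$th-order source $\mathcal{X}$ on $\{0,\ldots,\sigma-1\}$ by taking the state after emitting a character to be the last $k$ emitted characters, and letting the next character be uniformly distributed over the $\sigma-k$ characters that do \emph{not} occur in the current state; I would start the chain from its stationary distribution, which by symmetry is uniform over the $\sigma(\sigma-1)\cdots(\sigma-k+1)$ tuples of $k$ distinct characters. First I would check that for sufficiently large $\sigma$ this chain is irreducible on that set of states, so that the stationary process is ergodic, and that, since every reachable state consists of $k$ distinct characters, the conditional entropy of the next character given the past equals $\log(\sigma-k)$ in every state; hence $H(\mathcal{X})=\log(\sigma-k)$, establishing the entropy claim.

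Next I would record the structural fact that drives the rest: because each character differs from each of the preceding $k$ characters, any two characters at distance at most $k$ are distinct. In particular two \emph{overlapping} windows of length $k$, starting at positions $i$ and $j$ with $1\le j-i\le k-1$, can never coincide, since equality would force $c_{i}=c_{i+(j-i)}$ for a pair at distance $j-i\le k-1<k$. This removes overlapping pairs from consideration entirely, so only non-overlapping windows can collide.

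For a non-overlapping pair of windows at positions $i<j$ with $j\ge i+k$, I would bound the collision probability by revealing characters left to right and applying the chain rule: when $c_{j+t}$ is revealed its target $c_{i+t}$ is already fixed, and $c_{j+t}$ is conditionally uniform over a set of $\sigma-k$ admissible values, so it matches with probability at most $1/(\sigma-k)$ regardless of the history; multiplying over the $k$ coordinates gives a collision probability at most $(\sigma-k)^{-k}$. A union bound over the at most ${m\choose 2}\le m^2/2$ pairs of windows among the first $m=(\sigma-k)^{k/2-\epsilon}$ characters then yields total collision probability at most $\frac{1}{2}(\sigma-k)^{-2\epsilon}$, which tends to $0$ as $\sigma\to\infty$, giving the high-probability conclusion.

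I expect the main obstacle to lie in the collision bound rather than in the construction: one must argue carefully that the per-coordinate matching probability stays at most $1/(\sigma-k)$ under the conditioning imposed by the Markov dependence, and one must cleanly separate the overlapping case (dispatched by the distinctness property) from the non-overlapping case (dispatched by the chain rule). Verifying stationarity, irreducibility and the entropy value is comparatively routine, following from the symmetry of the uniform distribution over distinct-character tuples.
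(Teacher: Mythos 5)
Your proposal is correct, and it rests on the same two pillars as the paper's own proof: the identical source (next character drawn uniformly from the $\sigma-k$ characters absent from the current state) and the identical structural observation that overlapping windows can never coincide, because characters at distance at most $k$ are forced to be distinct. Where you genuinely diverge is in bounding collisions between non-overlapping windows. The paper fixes a tuple $\alpha$, asserts that its occurrences at positions at least $k$ apart are independent, uses Bayes' rule to bound the probability of an occurrence at position $i$ conditioned on non-occurrence at the overlapping positions, and then dominates the whole process by the classical birthday problem of drawing $(\sigma-k)^{k/2-\epsilon}$ balls with replacement from $(\sigma-k)^k$ bins, ending with an unquantified ``both probabilities are negligible.'' You instead reveal characters left to right: each coordinate of the later window matches its already-revealed target with conditional probability at most $1/(\sigma-k)$ whatever the history, so a fixed pair of windows collides with probability at most $(\sigma-k)^{-k}$, and a union bound over at most $\frac{1}{2}(\sigma-k)^{k-2\epsilon}$ pairs gives the explicit bound $\frac{1}{2}(\sigma-k)^{-2\epsilon}$. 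The paper's route buys intuition --- it makes the connection to the standard birthday paradox, explaining why $k/2$ is the right exponent --- but your route is more elementary, fully quantitative, and in fact more rigorous: the paper's independence claim is not literally true, since occurrences of $\alpha$ at distance exactly $k$ are mutually exclusive (the character at position $i+k$ must differ from the one at position $i$), and at larger distances the Markov dependence makes occurrences only approximately independent, whereas your per-coordinate conditional bound requires no independence at all. Your additional verifications (uniform stationary distribution over distinct-character tuples, irreducibility for large $\sigma$, entropy exactly $\log(\sigma-k)$ rather than merely at least that) go beyond what the paper checks and are sound.
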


\begin{proof}
Consider the $k$th-order Markov source that, whenever it has emitted a $k$-tuple \(\alpha = a_1, \ldots, a_k\), emits a character drawn uniformly at random from \(\{0, \ldots, \sigma - 1\} - \{a_1, \ldots, a_k\}\).  Notice this source has entropy at least \(\log (\sigma - k)\).  Also, a $k$-tuple $\alpha$ cannot occur in position $i$ if it occurs in any of the positions \(i - k + 1, \ldots, i - 1, i + 1, \ldots, i + k - 1\), and vice versa.  Finally, the probability $\alpha$ occurs in position $i$ is independent of whether it occurs in position $j$ for \(j \leq i - k\) or \(j \geq i + k\).

For \(i - k + 1 \leq j \leq i + k - 1\), let the indicator variable $B_j$ be $1$ if $\alpha$ occurs in position $j$, and $0$ otherwise.  By Bayes' Rule, the probability $\alpha$ occurs in position $i$, given that it does not occur in any of the positions \(i - k + 1, \ldots, i - 1, i + 1, \ldots, i + k - 1\), is
\begin{eqnarray*}
\lefteqn{\Pr \left[ B_i = 1 \,\left|\, \rule{0ex}{2ex} \right.
    B_{i - k + 1} = \cdots = B_{i - 1} = B_{i + 1} = \cdots = B_{i + k - 1} = 0 \right]}\\[1ex]
& = & \frac{\Pr \left[ \rule{0ex}{2ex} B_i = 1\ \mbox{and}\
    B_{i - k + 1} = \cdots = B_{i - 1} = B_{i + 1} = \cdots = B_{i + k - 1} = 0 \right]}
    {\Pr \left[ \rule{0ex}{2ex}
    B_{i - k + 1} = \cdots = B_{i - 1} = B_{i + 1} = \cdots = B_{i + k - 1} = 0 \right]}\\[1ex]
& \leq & \frac{\Pr [B_i = 1]}
    {1 - \Pr \left[ \rule{0ex}{2ex} B_{i - k + 1} = 1\ \mbox{or}\ \cdots\ \mbox{or}\ B_{i - 1} = 1\ \mbox{or}\ B_{i + 1} = 1\ \mbox{or}\ \cdots\ \mbox{or}\ B_{i + k - 1} = 1 \right]}\\[1ex]
& \leq & \frac{1 / (\sigma - k)^k}{1 - (2 k - 2) / (\sigma - k)^k}\\[1ex]
& = & \frac{1}{(\sigma - k)^k - 2 k - 2}\,.
\end{eqnarray*}
It follows that the probability $\alpha$ occurs at least twice among the first \((\sigma - k)^{k / 2 - \epsilon}\) emitted characters is at most the probability that, while drawing \((\sigma - k)^{k / 2 - \epsilon}\) elements uniformly at random and with replacement from a set of size \((\sigma - k)^k\), we draw a specified element at least twice.  Therefore, the probability any $k$-tuple occurs at least twice among the first \((\sigma - k)^{k / 2 - \epsilon}\) emitted characters is at most the probability that we draw any element at least twice.  For \(k \geq 1\) and sufficiently large $\sigma$, both probabilities are negligible. \qed
\end{proof}

\begin{theorem}
Suppose that, for any \(k \geq 1\), \(\epsilon > 0\) and sufficiently large $\sigma$, we are given a black box that allows us to sample characters from a $k$th-order Markov source over the alphabet \(\{0, \ldots, \sigma - 1\}\).  Even if we know the source has entropy either $0$ or at least \(\log (\sigma - k)\), there is still no algorithm that, with probability bounded away from \(1 / 2\), guesses the entropy correctly after sampling at most \((\sigma - k)^{k / 2 - \epsilon}\) characters.
\end{theorem}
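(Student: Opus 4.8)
The plan is to exhibit a prior over sources under which the entropy is $0$ or at least $\log(\sigma - k)$ with equal probability, yet the observer's sample of length $N = (\sigma - k)^{k/2 - \epsilon}$ has nearly the same distribution in both cases; a standard argument (Yao's principle) then shows no algorithm can guess the entropy with success probability bounded away from $1/2$. For the high-entropy case I would use the source $\mathcal{X}_1$ of the lemma. For the entropy-$0$ case I would use a \emph{random} deterministic source $\mathcal{X}_0$: independently for each $k$-tuple $\alpha = a_1, \ldots, a_k$, fix a transition value $g(\alpha)$ chosen uniformly from $\{0, \ldots, \sigma - 1\} \setminus \{a_1, \ldots, a_k\}$, and let $\mathcal{X}_0$ emit $g(\alpha)$ deterministically whenever its current $k$-tuple is $\alpha$, starting from the same initial $k$-tuple distribution as $\mathcal{X}_1$. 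Every realization of $\mathcal{X}_0$ is a deterministic, eventually periodic sequence and so, as a stationary ergodic source on its cycle, has entropy $0$; randomizing $g$ is exactly what prevents the observer from simply memorizing one fixed sequence.

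The key step is to couple the two sources up to the first repeated $k$-tuple. As long as no $k$-tuple has yet recurred, $\mathcal{X}_1$ emits, by definition, a character drawn uniformly from the $\sigma - k$ symbols absent from its current $k$-tuple; and $\mathcal{X}_0$ emits $g(\alpha)$ for a $k$-tuple $\alpha$ it has never visited before, so $g(\alpha)$ is a fresh, uniform draw from exactly those same $\sigma - k$ symbols. Hence the observed character sequence, \emph{stopped at the first recurrence of a $k$-tuple}, is identically distributed under $\mathcal{X}_0$ and under $\mathcal{X}_1$. In particular the probability that some $k$-tuple recurs within the first $N$ characters is the same for both, and by the lemma it is negligible. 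It follows that the distributions of the length-$N$ samples under the two sources have total variation distance at most this common recurrence probability, hence $o(1)$.

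Finally I would invoke the standard reduction: if the two sample distributions are within $o(1)$ in total variation, then any algorithm, deterministic or randomized, that reads at most $N$ characters and outputs a guess of the entropy is correct on the mixture with probability at most $1/2 + o(1)$, since its output is (up to $o(1)$) independent of which source was chosen. This contradicts the existence of an algorithm succeeding with probability bounded away from $1/2$, which proves the theorem. The main obstacle is the coupling step: one must argue carefully that the two processes are \emph{exactly} identically distributed until the first recurrence, so that the lemma's bound transfers verbatim to $\mathcal{X}_0$, and check that each realization of the randomized $\mathcal{X}_0$ is a bona fide entropy-$0$ stationary ergodic source, so that the prior really does place all of its entropy-$0$ mass on legitimate instances of the problem.
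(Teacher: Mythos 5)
Your proof is correct, but it takes a genuinely different route from the paper's. The paper constructs no prior over sources at all: it fixes an arbitrary algorithm $A$ and splits into two cases according to whether there exists a string $s$ of length $(\sigma-k)^{k/2-\epsilon}$ with no duplicate $k$-tuples such that, with probability at least $1/2$, $A$ stops and guesses ``at least $\log(\sigma-k)$'' after sampling a prefix of $s$. If such an $s$ exists, a single deterministic entropy-$0$ source that emits $s$ with probability $1$ fools $A$; if not, then the lemma's high-entropy source fools $A$, because with high probability its first $(\sigma-k)^{k/2-\epsilon}$ characters contain no duplicate $k$-tuples, and on every such sample $A$ guesses ``$0$'' (or keeps sampling past its budget) with probability at least $1/2$. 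Your argument replaces this case split with Le Cam's two-point method: your random successor function $g$ is, in effect, a distribution over exactly the kind of deterministic sources the paper invokes in its first case, and your coupling-until-first-recurrence shows the two sample distributions are $o(1)$-close in total variation, so no algorithm exceeds success $1/2+o(1)$ on the mixture. What your route buys: an explicit quantitative indistinguishability statement that handles randomized, adaptively stopping algorithms uniformly, and a clean separation between the probabilistic core (the coupling, which is indeed exact --- before the first recurrence each step queries $g$ at a fresh $k$-tuple, so the emitted character is uniform on the same $\sigma-k$ symbols as under $\mathcal{X}_1$) and the decision-theoretic conclusion. What it costs: the extra bookkeeping of the prior and the check that every realization of $g$ is a legitimate entropy-$0$ $k$th-order Markov source, though this is the same legitimacy the paper needs for its ``straightforward to build'' source, so it is no extra burden in substance. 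One caveat: drop or soften the claim that each realization is \emph{stationary ergodic} ``on its cycle''; a deterministic chain started from an arbitrary $k$-tuple is not stationary, but the theorem only requires a $k$th-order Markov source of entropy $0$, which a deterministic transition rule yields regardless of the start state, so your argument goes through without that claim.
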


\begin{proof}
Consider any algorithm $A$ for guessing the source's entropy.  Suppose there is a string $s$ of length \((\sigma - k)^{k / 2 - \epsilon}\) containing no duplicate $k$-tuples and such that, with probability at least \(1 / 2\), $A$ stops and guesses ``at least \(\log (\sigma - k)\)'' after sampling a prefix of $s$.  Then on any source with entropy $0$ that starts by emitting $s$ with probability $1$ the algorithm errs with probability at least \(1 / 2\).  Given $s$, it is straightforward to build such a source.

Now suppose there is no such string $s$.  Then whenever the first \((\sigma - k)^{k / 2 - \epsilon}\) sampled characters contain no duplicate $k$-tuples, $A$ either samples more characters or stops and guesses ``0'', with probability at least \(1 / 2\).  Therefore, on any source with entropy at least \(\log (\sigma - k)\) that, with high probability, does not emit duplicate $k$-tuples among its first \((\sigma - k)^{k / 2 - \epsilon}\) characters --- such as the one described in the lemma above --- $A$ either samples more characters or errs, with probability nearly \(1 / 2\).  \qed
\end{proof}

\bibliographystyle{plain}
\bibliography{entropy}

\end{document}